\keywords{partial combinatory algebra, embeddings, completions}
\newcommand{\darrow}{\!\downarrow}
\newcommand{\la}{\langle}
\newcommand{\ra}{\rangle}
\newcommand{\concat}{\mbox{}^\smallfrown}
\newcommand{\bigset}[1]{\big\{ #1 \big\}}
\newcommand{\restr}{\mbox{\raisebox{.5mm}{$\upharpoonright$}}}
\renewcommand{\leq}{\leqslant}
\renewcommand{\geq}{\geqslant}
\newcommand{\fa}{\forall}
\newcommand{\A}{\mathcal{A}}
\newcommand{\B}{\mathcal{B}}
\newcommand{\G}{\mathcal{G}}
\newcommand{\K}{\mathcal{K}}
\newcommand{\ZFC}{\mathrm{ZFC}}
\newcommand{\GCH}{\mathrm{GCH}}
\newcommand{\dom}{\mathrm{dom}}
\newcommand{\eff}{\mathrm{eff}}
\begin{document}

\title[Completions of Kleene's second model]{Completions of Kleene's second model}


\author[S. A. Terwijn]{Sebastiaan A. Terwijn\lmcsorcid{0000-0002-1464-6908}}

\address{Radboud University Nijmegen, Department of Mathematics\\
P.O. Box 9010, 6500 GL Nijmegen, the Netherlands.}
\email{terwijn@math.ru.nl}  




\begin{abstract}
  \noindent 
We investigate completions of partial combinatory algebras (pcas), 
in particular of Kleene's second model $\K_2$ and generalizations 
thereof.
We consider weak and strong notions of embeddability and completion 
that have been studied before in the literature. 
It is known that every countable pca can be weakly embedded into $\K_2$, 
and we generalize this to arbitrary cardinalities by considering 
generalizations of $\K_2$ for larger cardinals. 
This emphasizes the central role of $\K_2$ in the study of pcas. 
We also show that $\K_2$ and its generalizations have strong completions. 
\end{abstract}

\maketitle

\section{Introduction}

Combinatory algebra supplies us with a large variety of abstract 
models of computation. 
Kleene's second model $\K_2$, first defined in 
Kleene and Vesley~\cite{KleeneVesley}, 
is a partial combinatory algebra defined by an application operator on reals. 
(This in contrast with Kleene's first model $\K_1$, defined in terms of 
application on the natural numbers, which is the setting of classical 
computability theory.) 
Let $\omega^\omega$ denote Baire space, and let $\alpha,\beta\in\omega^\omega$. 
Then the application operator in $\K_2$ can be described by 
\begin{equation}\label{def:K2}
\alpha\cdot\beta = \Phi_{\alpha(0)}^{\alpha\oplus\beta}.
\end{equation}
Here $\Phi_e$ denotes the $e$-th Turing functional, and 
$\alpha\oplus\beta$ the join of $\alpha$ and $\beta$. The application 
is understood to be defined if the right hand side is total.
This definition of application in $\K_2$ is the one from  
Shafer and Terwijn~\cite{ShaferTerwijn}, 
which is not the same as the original coding from Kleene
but essentially equivalent to it, and more user friendly. 
The sense in which the two codings are equivalent is 
explained in Golov and Terwijn~\cite{GolovTerwijn}.
In Section~\ref{sec:K2}, we will also use the original coding of $\K_2$ 
(see Definition~\ref{def:K2k}) when we are discussing larger cardinals, 
for which we do not have a machine model available. 

We will also make use of several variants of $\K_2$.
$\K_2$ has an effective version $\K_2^\eff$, which is obtained by 
simply restricting $\K_2$ to computable elements of $\omega^\omega$. 
The van Oosten model $\B$, first defined in \cite{vanOosten1999}, 
is a variant of $\K_2$ obtained by extending the definition 
of $\K_2$ to partial functions.
This also has an effective version, denoted by $\B^\eff$.

A {\em completion\/} of a pca is an embedding into a 
(total) combinatory algebra. 
Here one can study weak and strong kinds of completion, 
see Definition~\ref{def:embedding}. 
In the strong version, the combinators $k$ and $s$ that every pca
possesses are considered as part of the signature, and in the weak 
version they are not. Both kinds of embedding have been studied in 
the literature. 
The question (posed by Barendregt, Mitschke, and Scott) whether 
every pca has a strong completion was answered in the negative by 
Klop~\cite{Klop}, see also 
Bethke, Klop, and de Vrijer \cite{BethkeKlopVrijer1999}.
On the other hand, $\K_2$ does have strong completions, see 
Theorem~\ref{thm:K2completable} below. 
Weak completions and embeddings were studied for example in 
Bethke \cite{Bethke1988}, 
Asperti and Ciabattoni~\cite{AspertiCiabattoni1997}, 
Shafer and Terwijn~\cite{ShaferTerwijn},  
Zoethout~\cite{Zoethout}, 
Golov and Terwijn~\cite{GolovTerwijn}, and 
more recently in 
Fokina and Terwijn~\cite{FokinaTerwijn}. 
It follows from the main result in Engeler~\cite{Engeler} that every 
pca has a weak completion. Hence fixing the combinators $k$ and $s$ 
or not in embedding results makes a big difference. 
There is an even weaker notion of embeddability of pcas
introduced by Longley, see \cite{Longley} and \cite{LongleyNormann}, 
that is useful in the study of realizability.
We will not consider this notion here, but simply note that the results 
about weak completions obtained here are the strongest possible. 
Extensions of Kleene's second model containing specific functions  
were studied in van Oosten and Voorneveld~\cite{vanOostenVoorneveld}.

It is known that every countable pca is weakly embeddable into $\K_2$. 
This is optimal, since this does not hold for strong embeddings. 
We generalize this to larger cardinalities by considering a 
generalization $\K_2^\kappa$ for larger cardinals $\kappa$ (Section~\ref{sec:K2}). 
We show that $\K_2$ and $\K_2^\kappa$ have strong completions 
(Theorems~\ref{thm:K2completable} and \ref{thm:K2kappa}), 
and prove an embedding result for $\K_2^\kappa$ (Theorem~\ref{thm:emb}).

Our notation for computability theory follows Odifreddi \cite{Odifreddi},
for set theory Kunen \cite{Kunen}, 
and for partial combinatory algebra van Oosten \cite{vanOosten}. 
For unexplained notions we refer to these monographs, 
though we make an effort to make the exposition self-contained by 
listing some preliminaries in Section~\ref{sec:prelim}. 
We use $\la\cdot\ra$ to denote coding of sequences, 
see Section~\ref{sec:K2}.

\section{Preliminaries}\label{sec:prelim}

A {\em partial combinatory algebra\/} (pca) is a set $\A$ together
with a partial map $\cdot$ from $\A\times \A$ to $\A$, that has 
the property that it is combinatorially complete (see below). 
We usually write $ab$ instead of $a\cdot b$, and if this is defined 
we denote this by $ab\darrow$.  
By convention, application associates to the left, 
so $abc$ denotes $(ab)c$.
We always assume that our pcas are nontrivial, i.e.\ have more than 
one element. 
Combinatorial completeness is equivalent to the existence of 
elements $k$ and $s$
with the following properties for all $a,b,c\in\A$:
\begin{align}
ka\darrow &\text{ and } kab = a, \label{k} \\
sab\darrow &\text{ and } sabc \simeq ac(bc) \label{s}.
\end{align}

We have the following notions of embedding for pcas. 
To distinguish applications in different pcas, we also write 
$\A\models a\cdot b\darrow$ if this application is defined in $\A$.

\begin{defi} \label{def:embedding}
For given pcas $\A$ and $\B$, an injection 
$f: \A \to \B$ is a \emph{weak embedding} if for all $a, b \in \A$, 
\begin{equation}\label{emb}
\A\models ab\darrow \; \Longrightarrow \;
\B\models f(a)f(b)\darrow \, = f(ab).
\end{equation}
If $\A$ embeds into $\B$ in this way we write $\A\hookrightarrow \B$.
If in addition to \eqref{emb}, for a specific choice of combinators 
$k$ and $s$ of $\A$, $f(k)$ and $f(s)$ serve as combinators for $\B$, 
we call $f$ a {\em strong embedding}.

A (total) combinatory algebra $\B$ is called a {\em weak completion\/} of $\A$ 
if there exists an embedding $\A \hookrightarrow \B$.
If the embedding is strong, we call $\B$ a {\em strong completion}. 
\end{defi}

There exists  pcas that do not have a strong completion, 
see Klop~\cite{Klop} and 
Bethke, Klop, and de Vrijer \cite{BethkeKlopVrijer1999}.
On the other hand, it follows from Engeler~\cite{Engeler} that 
every pca has a weak completion.\footnote{\label{ftn:Engeler}
The main result in \cite{Engeler} is stated only for structures 
with one binary operator, but on p390 it is mentioned that the 
representation theorem can be generalized to multiple and also to
partial operations, so that in particular it also holds for pcas. 
This claim is further substantiated in the supplementary 
note \cite{Engeler2021}.}
So in particular we see that weak and strong embeddings and 
completions are different. 

We can see that every countable pca has a weak completion as follows. 
In Golov and Terwijn~\cite[Corollary 6.2]{GolovTerwijn} it was shown that 
every countable pca is weakly embeddable into $\K_2$.
Now $\K_2$ is weakly completable, as it weakly embeds, for example, 
into Scott's graph model $\G$ (see Scott~\cite{Scott} for the definition, 
and \cite{GolovTerwijn} for the embedding), 
which is a total combinatory algebra related to enumeration reducibility. 
Hence every countable pcas has a weak completion. Note that this 
includes the counterexample for strong completions from 
Bethke et al.\ \cite{BethkeKlopVrijer1999}, which is countable. 

In the next section we show that 
$\K_2$ and $\K_2^\eff$ have strong completions.
Note that none of these completions is computable, 
as (total) combinatory algebras are never computable.  
The complexity of completions was studied in \cite{Terwijn2023}.

\section{Unique head normal forms} \label{sec:uhnf}

In this section we give two different arguments showing that 
$\K_2$ is strongly completable. 
We start by verifying that $\K_2$ satisfies the criterion from  
Bethke, Klop, and de Vrijer \cite{BethkeKlopVrijer1996} 
about unique head normal forms (hnfs).
This argument that $\K_2$ is completable is longer than the 
short argument at the end of the section, but it is still 
interesting because it yields as a completion a certain 
term model that is potentially different, and it also serves as 
a template for the later generalization to larger cardinalities 
(Theorem~\ref{thm:K2kappa}).

Given a pca $\A$ and a choice of combinators $s$ and $k$, 
the {\em head normal forms\/} (hnfs) of $\A$ are all 
elements of the form $s$, $k$, $ka$, $sa$, and $sab$, 
with $a,b\in \A$. 
Each of these five types of hnf is called {\em dissimilar\/} 
from the other four. 

\begin{defi}[Bethke et al.\ \cite{BethkeKlopVrijer1996}]
$\A$ has {\em unique hnfs\/} if 
\begin{itemize}
\item no two dissimilar hnfs can be equal in $\A$, 
\item {\em Barendregt's axiom\/} holds in $\A$: 
for all $a,a',b,b'\in \A$, 
\begin{equation}\label{BA}
sab=sa'b' \Longrightarrow a=a' \wedge b=b'.
\end{equation}
\end{itemize}
\end{defi}
\noindent 
Bethke et al.\ \cite{BethkeKlopVrijer1996} confirmed a conjecture 
from Klop~\cite{Klop}, namely that every pca with unique hnfs 
is strongly completable. 

\begin{lem}\label{lem:uhfn}
$\K_2$ and $\K_2^\eff$ have unique hnfs. 
\end{lem}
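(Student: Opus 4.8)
The plan is to exhibit explicit combinators $k$ and $s$ in $\K_2$ for which the two conditions of the definition of unique hnfs can be verified directly from the definition of application in $\eqref{def:K2}$, using the fact that elements of $\K_2$ are just reals $\alpha\in\omega^\omega$, which carry a lot of information "on the outside" that can be read off without performing any computation. First I would recall (or fix) a concrete choice of $k$ and $s$: these are reals whose values $k(0)$, $s(0)$ are indices of Turing functionals realizing the defining equations $\eqref{k}$ and $\eqref{s}$, and one has genuine freedom in how to code the "partial application" data into the tail of $ka$, $sa$, $sab$. The key observation is that in $\K_2$ one can arrange for the hnfs $s$, $k$, $ka$, $sa$, $sab$ to be reals whose initial segments (say, the value at $0$, or at $0$ and $1$) encode a \emph{tag} recording which of the five shapes the element has, and in the $sab$ case, encode $a$ and $b$ in recoverable, disjoint parts of the real (e.g.\ $sab = \la \text{tag}_s\ra \concat (a\oplus b)$ up to the coding conventions already in use).

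The main steps, in order, are: (1) Fix $k$, $s$ and compute the reals $ka$, $sa$, $sab$ explicitly enough to see their "header" structure; this is the step where one must be slightly careful that the chosen indices for the underlying Turing functionals indeed give \emph{total} functionals on the relevant inputs so that the applications are defined, and that the resulting reals have the claimed shape. (2) Dissimilarity: observe that the five types have pairwise distinct headers/tags, so two reals of dissimilar type cannot be equal as elements of $\omega^\omega$. (3) Barendregt's axiom: from $sab = sa'b'$ as reals, use that the header identifies the element as being of type $sab$ and that the remainder codes $a\oplus b$ (resp.\ $a'\oplus b'$) in a way from which $a$ and $b$ are uniquely recoverable; hence $a=a'$ and $b=b'$. (4) Note that the whole construction is effective, i.e.\ it restricts to $\K_2^\eff$: if $a,b$ are computable then so are $ka$, $sa$, $sab$, and the header/decoding maps are computable, so exactly the same argument applies inside $\K_2^\eff$.

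The part that needs the most care—though it is not deep—is step (1): one must actually choose the combinators so that the partial-application values $ka$, $sa$, $sab$ are \emph{realized} by reals with a transparent tagged format, rather than by some opaque index whose output on future inputs happens to satisfy $\eqref{k}$, $\eqref{s}$. In other words, the work is in showing that the standard $s,k,s',k'$ construction for $\K_2$ can be carried out so that the intermediate reals wear their syntactic type on their sleeve; once that format is pinned down, dissimilarity and Barendregt's axiom are immediate. I expect this to be short, since $\K_2$ is "data-rich" (a real has room to store a tag and two more reals), in contrast to $\K_1$, where such tagging is exactly what fails and is the source of Klop's counterexample phenomena. The effectiveness remark in step (4) is then free, because every map used (the pairing $a\oplus b$, the tagging, and the decoding) is primitive recursive in indices.
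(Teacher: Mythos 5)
Your proposal is correct and follows essentially the same route as the paper's proof: choose $k$ and $s$ so that the reals $ka$, $sa$, $sab$ carry a fixed header (the value at $0$) identifying their type and store $a$ (resp.\ $a$ and $b$) recoverably in their tails, which gives dissimilarity of the five hnf types and Barendregt's axiom at once, and then observe that these $k,s$ are computable so the same argument works in $\K_2^\eff$. The paper likewise leaves the exact storage format informal, noting only that the freedom in the coding of $\K_2$ makes such tagged combinators available.
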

\begin{proof}
As mentioned in the introduction, we will use the coding of $\K_2$
with application defined by 
\[
\alpha\cdot\beta = \Phi_{\alpha(0)}^{\alpha\oplus\beta}
\]
for all $\alpha,\beta\in\omega^\omega$.

We verify that we can choose $k$ and $s$ in $\K_2$ with their 
defining properties \eqref{k} and \eqref{s}, 
and such that all hnfs are different. 
First we note that since $\K_2$ is nontrivial (i.e.\ has more 
than one element) we will automatically have $s\neq k$
(cf.\ Barendregt~\cite[Remark 5.1.11]{Barendregt}).

Somewhat abusing terminology (common in set theory), 
we will refer to the elements of $\omega^\omega$ as reals. 
Without spelling out all syntactic details, we simply note that we 
can define computable reals $k$ and $s$ such that the following hold:
\begin{itemize}

\item $ka = \Phi_{k(0)}^{k\oplus a}$ is a real that stores $a$, 
prepended by a code $ka(0)$ that does not need $a$, so $ka(0)=m$ 
for all~$a$.

\item $kab = \Phi_{ka(0)}^{ka\oplus b}= \Phi_m^{ka\oplus b}$. 
Here $m$ is a code that unpacks $a$ from $ka\oplus b$.

\item $sa = \Phi_{s(0)}^{s\oplus a}$ is a real that stores $a$, 
prepended by a code $sa(0)$ not depending on $a$, so $sa(0)=n$ for all~$a$.

\item $sab = \Phi_{sa(0)}^{sa\oplus b} = \Phi_n^{sa\oplus b}$ is a real 
that stores $a$ and $b$, prepended by a code $sab(0)$ that does not need 
$a$ and $b$, so $sab(0)=p$ for all $a$ and~$b$.

\item $sabc = \Phi_{sab(0)}^{sab\oplus c} = \Phi_p^{sab\oplus c}$. 
Here $p$ is a code that unpacks $a$ and $b$ from $sab\oplus c$, 
and then computes $ac(bc)$.

\end{itemize}
\noindent 
To turn these informal definitions into formal ones, one would first 
give more explicit codings of $ka$, $sa$, and $sab$, and then formally 
define $m,n,p$. However, the details of how one would store these reals 
are not essential, and the above definitions should suffice to convince 
the reader of the existence of $k$ and $s$. Note that $k$ and $s$, 
although elements of $\omega^\omega$, are essentially finite objects 
coding finite programs, hence computable. 
Also, all hnfs defined above are different. 
Note that the definition of $sab$ guarantees Barendregt's axiom \eqref{BA}, 
since $a$ and $b$ can be extracted from $sab$.
This shows that we can choose $k$ and $s$ so that $\K_2$ has 
unique hnfs. 
The argument for $\K_2^\eff$ is the same, since we may use 
the exact same computable reals $k$ and $s$ for $\K_2^\eff$.
So also $\K_2^\eff$ has unique hnfs. 
\end{proof}

\begin{thm}\label{thm:K2completable}
$\K_2$ and $\K_2^\eff$ are strongly completable. 
\end{thm}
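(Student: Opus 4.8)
The plan is to invoke the theorem of Bethke, Klop, and de Vrijer \cite{BethkeKlopVrijer1996} stated just above, which says that every pca with unique hnfs is strongly completable. Since Lemma~\ref{lem:uhfn} establishes that $\K_2$ and $\K_2^\eff$ have unique hnfs (for a suitable choice of combinators $k$ and $s$), the conclusion is immediate. So the entire content of the proof is the citation of the external result together with the reference to the preceding lemma.

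There is essentially no obstacle here, since both ingredients are already in place: the completability criterion is quoted as a black box, and the verification of the criterion for our models was the work done in Lemma~\ref{lem:uhfn}. The only thing worth saying in addition is a remark on what the resulting completion looks like. Following Bethke et al., the completion obtained from a pca with unique hnfs is a term model built over the pca: one closes the carrier under formal application, then quotients by the congruence generated by the pca's equalities together with the combinatory axioms, and the unique-hnf property is exactly what guarantees that this construction does not collapse $k$, $s$, and the application structure, so that $k$ and $s$ remain genuine combinators in the quotient. Hence the completion is in general a proper extension of $\K_2$, not $\K_2$ itself.

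\begin{proof}
By Lemma~\ref{lem:uhfn}, both $\K_2$ and $\K_2^\eff$ have unique hnfs for a suitable choice of combinators $k$ and $s$. By the result of Bethke, Klop, and de Vrijer \cite{BethkeKlopVrijer1996} quoted above, every pca with unique hnfs is strongly completable. Hence $\K_2$ and $\K_2^\eff$ are strongly completable.
\end{proof}

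\noindent
The completion supplied by \cite{BethkeKlopVrijer1996} is a term model: one freely adjoins formal applications to the carrier and quotients by the congruence generated by the equalities holding in the pca together with the axioms \eqref{k} and \eqref{s}. Uniqueness of hnfs is precisely what prevents this quotient from identifying dissimilar head normal forms or violating Barendregt's axiom \eqref{BA}, so that the images of $k$ and $s$ still act as combinators. In general this term model properly extends $\K_2$. We will reuse this template in Section~\ref{sec:K2} to handle the generalizations $\K_2^\kappa$.
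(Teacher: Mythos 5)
Your proof is correct and is essentially identical to the paper's: both simply combine Lemma~\ref{lem:uhfn} with the unique-hnf completability criterion of Bethke, Klop, and de Vrijer. The additional commentary on the term-model construction is accurate but not needed for the argument.
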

\begin{proof}
The theorem follows immediately from Lemma~\ref{lem:uhfn} 
and the criterion provided in Bethke, Klop, and de Vrijer 
\cite{BethkeKlopVrijer1996}.
\end{proof}

We point out again that Theorem~\ref{thm:K2completable} is for 
the coding of $\K_2$ from Shafer and Terwijn~\cite{ShaferTerwijn}. 
This coding is equivalent to the original coding from 
Kleene and Vesley \cite{KleeneVesley} in the sense that 
there are weak embeddings back and forth, as proved in 
Golov and Terwijn~\cite{GolovTerwijn}.
It does not automatically follow from this that Theorem~\ref{thm:K2completable} 
also holds for the original coding. 
That it does will follow from Theorem~\ref{thm:K2kappa}.

We now give a second, much quicker, argument for the strong 
completability of $\K_2$.
Recall that $\B$ denotes van Oosten's model, obtained by extending 
$\K_2$ to partial functions.
The embedding $\K_2\hookrightarrow \B$ is strong, since it is 
just the inclusion, and we can choose the combinators $s$ and $k$ 
of $\K_2$ so that they also work for $\B$. 
Since $\B$ is a total combinatory algebra, 
it follows immediately that $\K_2$ is strongly completable. 
The same remarks apply for the embedding 
$\K_2^\eff\hookrightarrow \B^\eff$ of the effective versions 
of these pcas.

\section{$\K_2$ generalized} \label{sec:K2}

We will make use of the following generalization of $\K_2$, 
taken from van Oosten~\cite[1.4.4]{vanOosten}.

The basic idea underlying $\K_2$ is that every 
partial continuous $f:\omega^\omega \to \omega^\omega$ 
can be coded by a real $\alpha\in\omega^\omega$. 
Namely, $f$ can be defined by its modulus of continuity 
$\hat f:\omega^{<\omega} \to \omega^{<\omega}$, which is 
essentially a function $\alpha:\omega\to\omega$ by 
virtue of $\omega^{<\omega} = \omega$ (as cardinals).
This observation allows use to define a partial application 
$\alpha\cdot \beta$ on $\omega^\omega$ by applying the 
partial continuous function coded by $\alpha$ to~$\beta$.

Now suppose that $\kappa$ is a cardinal such that 
$\kappa^{<\kappa} = \kappa$. 
Then the same ideas apply 
to show that every partial continuous $f:\kappa^\kappa\to\kappa^\kappa$
can be coded by an element $\alpha\in\kappa^\kappa$. 
So again we can define a partial application $\alpha\cdot \beta$ 
on $\kappa^\kappa$ by interpreting $\alpha$ as a 
partial continuous function on $\kappa^\kappa$.
The proof that this yields a pca proceeds along the same 
lines as the proof for $\K_2$. 
This pca is denoted by $\K_2^\kappa$.

For the proofs below (especially that of Theorem~\ref{thm:emb}), 
we need to be more specific about how 
$\alpha\in\kappa^\kappa$ codes a partial continuous function. 
Unfortunately, for large cardinals we have no machine model 
available,\footnote{This could be remedied to some extent by 
using tools from higher computability theory, but also these 
have a limit, and the cardinals we need here may be even larger.}
so that we have to use the less user friendly but more 
concrete coding of Kleene, which we now specify.

Following Kunen~\cite[p34]{Kunen}, we define 
$\kappa^{<\kappa} = \bigcup\{\kappa^\lambda:\lambda<\kappa\}$.
For a function $\sigma\in \kappa^{<\kappa}$, we let 
$\dom(\sigma)$ denote the largest $\lambda<\kappa$ on which 
it is defined. 
By assuming $\kappa^{<\kappa} = \kappa$, we have a bijective 
coding function $\la\cdot\ra: \kappa^{<\kappa} \to \kappa$.
For $n\in \kappa$ and $\sigma$ an element of either $\kappa^{<\kappa}$ 
or $\kappa^\kappa$, we let $n\concat\sigma$ denote the function defined by 
\begin{align*}
(n\concat\sigma)(0) &= n \\
(n\concat\sigma)(m+1) &= \sigma(m) \text{ if } m<\omega \\
(n\concat\sigma)(m) &= \sigma(m) \text{ if } \omega\leq m <\kappa. 
\end{align*}
For $\alpha\in\kappa^\kappa$ and $n\in\kappa$, we let 
$\alpha\restr n$ denote the restriction of $\alpha$ to 
elements smaller than~$n$.

Define a topology on $\kappa^\kappa$ that has as basic open sets
\[
U_\sigma = \bigset{\alpha\in\kappa^\kappa \mid 
\fa n\in\dom(\sigma) \; \alpha(n) = \sigma(n)}
\]
with $\sigma\in \kappa^{<\kappa}$.
Then every $\alpha\in\kappa^\kappa$ defines a partial continuous 
map $F_\alpha:\kappa^\kappa\to \kappa$ as follows: 
\begin{align}
F_\alpha(\beta) = n &\text{ if there exists $m\in\kappa$ such that }
\alpha(\la\beta\restr m\ra) = n+1 \label{app1} \\
&\text{ and } 
\fa k<m \; (\alpha(\la\beta\restr k\ra) =0 \text{ or a limit}).
\label{app2}
\end{align}
If there are no such $n$ and $m$ we let $F_\alpha(\beta)$ be undefined. 

\begin{defi}\label{def:K2k}
$\K_2^\kappa$ is the pca on $\kappa^\kappa$ with application 
$\alpha\cdot\beta$ defined by 
\[
(\alpha\cdot\beta)(n) = F_\alpha(n\concat\beta),
\]
where it is understood that $\alpha\cdot\beta\darrow$ if 
$(\alpha\cdot\beta)(n)\darrow$ for every $n\in\kappa$.
\end{defi}

\begin{lem}\label{lem:repr}
Suppose $F:\kappa^\kappa \times \kappa^\kappa \to \kappa^\kappa$ 
is continuous. Then there exists $\phi\in \kappa^\kappa$ such that 
$\phi\alpha\beta = F(\alpha,\beta)$ 
for all $\alpha, \beta\in \kappa^\kappa$.
\end{lem}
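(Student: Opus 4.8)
The plan is to follow the standard argument establishing the corresponding fact for $\K_2$ (see van Oosten~\cite{vanOosten}), adapted to $\kappa$, in three steps. First I would prove the unary version: every \emph{total} continuous $G:\kappa^\kappa\to\kappa^\kappa$ is \emph{represented} by some $\gamma_G\in\kappa^\kappa$, meaning $\gamma_G\cdot\beta = G(\beta)$ for all $\beta$. Since $G$ is continuous it is determined by a monotone ``modulus'' on $\kappa^{<\kappa}$, and as $\kappa^{<\kappa}=\kappa$ this modulus is itself codable as an element of $\kappa^\kappa$ via the coding function $\la\cdot\ra$. Concretely, recalling from Definition~\ref{def:K2k} that $(\gamma\cdot\beta)(n)=F_\gamma(n\concat\beta)$, one defines $\gamma_G$ to have value $v+1$ at the code $\la(n\concat\beta)\restr m\ra$ as soon as the initial segment of $\beta$ encoded there already forces $G(\beta')(n)=v$ for every $\beta'$ extending it, and value $0$ otherwise (note $0$ is a limit, so this meets clause~\eqref{app2}); on all remaining elements of $\kappa$ one sets its value to $0$. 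Totality of $G$ together with continuity then give that $\gamma_G\cdot\beta$ is defined and equals $G(\beta)$ for every $\beta$.

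Second, given the binary continuous $F$, I would produce for each $\alpha\in\kappa^\kappa$ an element $g_\alpha$ representing the unary map $\beta\mapsto F(\alpha,\beta)$, but \emph{constructed lazily so that $g_\alpha$ depends continuously on $\alpha$}. Rather than coding the full modulus of $F(\alpha,-)$ --- which need not vary continuously with $\alpha$, since the prefixes of $\alpha$ needed to pin down a value of $F$ may be cofinal in $\kappa$ --- one declares $g_\alpha$ to have value $v+1$ at the code $\la(n\concat\beta)\restr m\ra$ exactly when the prefix of $\beta$ it encodes, together with the prefix of $\alpha$ of that same length, \emph{jointly} force $F(\cdot,\cdot)(n)=v$, and value $0$ otherwise. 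Then the value of $g_\alpha$ at any code $j<\kappa$ depends only on $\alpha\restr\ell_j$ for some $\ell_j<\kappa$ read off from $j$; since $\kappa^{<\kappa}=\kappa$ forces $\kappa$ to be regular, suprema of fewer than $\kappa$ such lengths stay below $\kappa$, and hence $H:\alpha\mapsto g_\alpha$ is continuous. By continuity of $F$, for every $n$ and $\beta$ some common prefix length forces the value $F(\alpha,\beta)(n)$, so $g_\alpha\cdot\beta = F(\alpha,\beta)$.

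Third, applying step one to the total continuous map $H$ yields $\phi:=\gamma_H\in\kappa^\kappa$ with $\phi\cdot\alpha = H(\alpha)=g_\alpha$ for all $\alpha$, and hence $\phi\alpha\beta = g_\alpha\cdot\beta = F(\alpha,\beta)$, as required. The routine but fiddly parts are the bookkeeping needed to match the exact conventions of Definition~\ref{def:K2k} --- the finite-level shift in $n\concat\sigma$, the role of $\la\cdot\ra$, and the ``$0$ or a limit'' clause --- and checking that every application in sight ($\gamma_G\cdot\beta$, $\phi\cdot\alpha$, $\phi\alpha\beta$) is genuinely total. The single genuine obstacle, and the point where a naive argument fails, is step two: one must interleave the reading of $\alpha$ and $\beta$ instead of coding the modulus of $F(\alpha,-)$ outright, so that currying becomes a continuous operation; everything else proceeds exactly as in the classical case $\kappa=\omega$.
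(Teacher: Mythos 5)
Your proposal is correct and takes essentially the same route as the paper, which simply defers to the classical representation argument (van Oosten's Lemma~1.4.1, carried out as in Longley and Normann's Lemma~12.2.2): your three steps --- unary representation of total continuous maps, continuous currying by interleaving the reading of $\alpha$ and $\beta$, then representing the currying map itself --- are exactly that argument transplanted to $\kappa$, with the regularity of $\kappa$ correctly extracted from $\kappa^{<\kappa}=\kappa$. Only a cosmetic remark: clause~\eqref{app2} admits the value $0$ explicitly, so there is no need to call $0$ a limit.
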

\begin{proof}
This statement can be proved in the same way as  
Lemma~1.4.1 in van Oosten~\cite{vanOosten}. 
The easiest way to do this is by following the proof of 
Lemma~12.2.2 in Longley and Normann~\cite{LongleyNormann}.
\end{proof}
Lemma~\ref{lem:repr} is useful in showing that $\K_2^\kappa$ 
is a pca. The existence of a combinator $k\in\kappa^\kappa$ 
such that $k\alpha\beta = \alpha$ for all 
$\alpha, \beta\in \kappa^\kappa$ follows immediately from the 
lemma. We discuss the existence of the combinator $s$ in 
the proof of Lemma~\ref{lem2:uhfn} below. 
This lemma generalizes Lemma~\ref{lem:uhfn} to $\K_2^\kappa$.
Note that there is no machine model available for $\K_2^\kappa$, 
so that we are forced to use the coding above. 
The resulting proof unfortunately has little beauty in it, 
least of all ``cold and austere'', but it does give us what 
we need, namely Theorem~\ref{thm:K2kappa}.

\begin{lem}\label{lem2:uhfn}
$\K_2^\kappa$ has unique hnfs. 
\end{lem}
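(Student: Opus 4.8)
The plan is to mirror the proof of Lemma~\ref{lem:uhfn} as closely as possible, but replacing the informal appeal to a machine model with explicit constructions of the codes $\alpha\in\kappa^\kappa$ using the concrete coding of Definition~\ref{def:K2k}. First I would fix the combinator $k$: by Lemma~\ref{lem:repr} there is $\phi_k\in\kappa^\kappa$ with $\phi_k\alpha\beta=\alpha$, and I would set $k=\phi_k$. The point is that $k\alpha$ is then a specific element of $\kappa^\kappa$ determined by the continuous-function machinery, and I would check that it has the shape of a ``storage'' real: it encodes (a representation of) $\alpha$ together with a fixed instruction prefix that does not depend on $\alpha$. Concretely, I would arrange things — using the freedom in choosing $\phi_k$, or by composing with a suitable fixed reindexing real — so that $(k\alpha)(0)$ takes a fixed value, say $m_0$, for \emph{all} $\alpha$, and so that $\alpha$ can be recovered from $k\alpha$ by a fixed continuous decoding. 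This is exactly the content of the first two bullets in the proof of Lemma~\ref{lem:uhfn}, now made literal in the coding.

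Next I would do the same for $s$. Again Lemma~\ref{lem:repr} gives some $\phi_s$ with $\phi_s\alpha\beta\gamma\simeq\alpha\gamma(\beta\gamma)$; the work is to \emph{choose} such an $s$ whose partial applications $s\alpha$ and $s\alpha\beta$ are honest storage reals. That is, I would want: $(s\alpha)(0)=n_0$ for all $\alpha$, with $\alpha$ recoverable from $s\alpha$; and $(s\alpha\beta)(0)=p_0$ for all $\alpha,\beta$, with both $\alpha$ and $\beta$ recoverable from $s\alpha\beta$. The recoverability of $\alpha$ and $\beta$ from $s\alpha\beta$ is precisely what gives Barendregt's axiom \eqref{BA}. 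To build such an $s$ I would define its values directly on codes $\la\sigma\ra$: on input a finite condition, it first outputs a fixed block of $0$'s (or limit values, per \eqref{app2}) long enough to ``commit'' to being in the $s$-case, then copies its argument into a reserved set of coordinates, and only at the very end — once enough of the third argument $\gamma$ is available — begins emitting the values of $\alpha\gamma(\beta\gamma)$, read off from the stored copies of $\alpha$ and $\beta$ via the application of Definition~\ref{def:K2k}. The bookkeeping is tedious but entirely analogous to the $\omega$ case; the one genuinely new feature is handling the three clauses in the definition of $n\concat\sigma$ and the ``$0$ or a limit'' condition in \eqref{app2}, so I would be careful that the ``committing prefix'' uses values that \eqref{app2} treats as non-halting.

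With $k$ and $s$ so chosen, the five types of hnf — $k$, $s$, $k\alpha$, $s\alpha$, $s\alpha\beta$ — are pairwise dissimilar because their $0$-th values are, respectively, $k(0)$, $s(0)$, $m_0$, $n_0$, $p_0$, which I would arrange to be five distinct ordinals in $\kappa$ (this is possible since $\kappa$ is infinite and these are finitely many fixed choices); and $k\neq s$ follows already from nontriviality of $\K_2^\kappa$ as in Lemma~\ref{lem:uhfn}. Barendregt's axiom holds since $\alpha$ and $\beta$ are extractable from $s\alpha\beta$. Hence $\K_2^\kappa$ has unique hnfs. The main obstacle is purely the explicitness: unlike in Lemma~\ref{lem:uhfn}, we cannot wave at a compiler, so we must actually exhibit the reals $k$ and $s$ in terms of the coding $\la\cdot\ra$ and the functions $F_\alpha$, and verify that their partial applications genuinely have the storage form and fixed leading values. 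None of this is deep, but it is where all the care goes; I would present it at roughly the level of detail of the bulleted list in the proof of Lemma~\ref{lem:uhfn}, trusting the reader to see that the coding supports the required manipulations.
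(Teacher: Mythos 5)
Your overall strategy (obtain $s$ from Lemma~\ref{lem:repr}, then modify it so that $\alpha$ and $\beta$ are recoverable from $s\alpha\beta$, giving Barendregt's axiom \eqref{BA}, and tag the hnf types apart) is the same as the paper's, but the step you describe as ``copies its argument into a reserved set of coordinates, and only at the very end \ldots begins emitting the values of $\alpha\gamma(\beta\gamma)$'' is exactly where the real difficulty lives, and as stated it does not work. In the Kleene coding of Definition~\ref{def:K2k} there is no a priori ``reserved'' data section: every coordinate $\la\sigma\ra$ of $s\alpha\beta$ is a coordinate $\la (n\concat\gamma)\restr m\ra$ of some branch, and it is consulted by the application unless some shorter prefix of that branch already returned a successor value. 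So which coordinates are free for storage depends on where (and whether) the computation $s\alpha\beta\gamma$ halts --- a property that varies discontinuously with $(\alpha,\beta)$, while the modified $s\alpha\beta$ must remain continuous in $(\alpha,\beta)$ to come from Lemma~\ref{lem:repr}. Moreover, you cannot simply ``copy'' the raw values of $\alpha$ and $\beta$ into consulted coordinates: any successor value among them would trigger \eqref{app1} and corrupt the output, so the stored data must be encoded using only the non-halting values ($0$ and limits) of \eqref{app2}; you flag this issue only for the ``committing prefix,'' not for the storage itself. Finally, each of $\alpha,\beta$ has length $\kappa$, the same as the length of a branch, so a sequential ``store first, compute at the very end'' layout leaves no room for the computation.

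The paper's proof resolves precisely these points: when the computation on a branch halts at stage $m$, the coordinates $\la n\concat\gamma\restr m'\ra$ with $m'>m$ really are irrelevant and can hold $\alpha\oplus\beta$; when it never halts, $\alpha\oplus\beta$ is re-encoded as an element of $2^\kappa$ and written into the non-halting values using $0$ versus $\omega$ (which requires $\kappa>\omega$; the case $\kappa=\omega$ is covered by Lemma~\ref{lem:uhfn}); and, because neither case can be selected continuously in advance, the two codings are merged into one scheme that switches dynamically along each branch at the first successor value. Your proposal would need to be repaired along these lines before the recoverability of $\alpha$ and $\beta$ --- and hence \eqref{BA} --- is actually established. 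The remaining ingredients of your argument (fixing distinguished leading values for $k\alpha$, $s\alpha$, $s\alpha\beta$, and $k\neq s$ from nontriviality) are fine and fixable, but they are the easy part.
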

\begin{proof}
The existence of a combinator $s\in\kappa^\kappa$ 
such that $s\alpha\beta$ is always defined and 
$s\alpha\beta\gamma \simeq \alpha\gamma(\beta\gamma)$ for all 
$\alpha, \beta, \gamma\in \kappa^\kappa$ follows from 
Lemma~\ref{lem:repr} if we can define a (total) continuous 
binary function $F$ such that 
\begin{equation}\label{want}
F(\alpha,\beta)\gamma \simeq \alpha\gamma(\beta\gamma)
\end{equation}
for all $\alpha, \beta, \gamma$.
Writing out \eqref{want} using Definition~\ref{def:K2k} shows 
exactly how to define $F$, and that it is indeed total and 
continuous in $\alpha$ and $\beta$. 

Now if we can also make $F$ injective, it will follow that $s$
such that $s\alpha\beta = F(\alpha,\beta)$, is also injective, 
and hence satisfies Barendregt's axiom~\eqref{BA}.
However, the definition of $F$ obtained from unwinding \eqref{want} 
does not automatically give that $F$ is injective, so we need to 
do a bit of extra work for this. 
We show that, starting with any $F$ satisfying \eqref{want}, we can 
convert $F$ to a function that is in addition injective. 
The idea is that if $(F(\alpha,\beta)\gamma)(n)$ is defined for just any 
$\gamma\in\kappa^\kappa$ and $n\in\kappa$, then the coding of 
Definition~\ref{def:K2k} leaves ample room to code $\alpha$ and $\beta$ 
in the values of $F(\alpha,\beta)$ that do not matter. 
The only trouble is when there are no defined values at all, in which 
case we have to resort to another solution. 

{\em Case~1.} Suppose $(F(\alpha,\beta)\gamma)(n)$ is defined for some 
$\gamma\in\kappa^\kappa$ and $n\in\kappa$.
By Definition~\ref{def:K2k} this means that 
$F_{F(\alpha,\beta)}(n\concat\gamma)$ is defined. This definition uses 
only an initial segment $\gamma\restr m$ for some $m\in\kappa$,
namely the values 
$F(\alpha,\beta)(\la n\concat \gamma\restr m' \ra)$ with $m'\leq m$.
In particular, 
the higher values 
$F(\alpha,\beta)(\la n\concat\gamma\restr m'\ra)$ for $m'>m$ are irrelevant, 
and we are free to alter them as we like, without affecting the 
effect of $F$.
Now we use these values to code the sequences $\alpha$ and $\beta$. 
Using a bijection $\kappa + \kappa \leftrightarrow \kappa$, we can 
code them as one sequence $\alpha\oplus\beta$ of length $\kappa$, 
by redefining the values 
$F(\alpha,\beta)(\la n\concat\gamma\restr m'\ra)$ with $m'>m$.

{\em Case~2.} $(F(\alpha,\beta)\gamma)(n)$ is undefined for all 
$\gamma\in\kappa^\kappa$ and $n\in\kappa$. In this case we have by 
Definition~\ref{def:K2k} that 
$F(\alpha,\beta)(n)$ is equal to $0$ or a limit for all $n\in\kappa$. 
The effect of $F(\alpha,\beta)$ does not change if we change the value 
from one limit or $0$ to another. W.l.o.g.\ $\kappa>\omega$, since for 
$\kappa = \omega$ we already have a proof (Lemma~\ref{lem:uhfn}).
If $\kappa>\omega$ we have the two values $0,\omega<\kappa$ that we 
can use for coding $\alpha$ and $\beta$. Again we can code these as 
one sequence $\alpha \oplus \beta \in\kappa^\kappa$.
Using a bijection $\kappa^\kappa \leftrightarrow 2^\kappa$ we may 
assume $\alpha\oplus\beta \in 2^\kappa$, and code it using the two values 
$0$ and $\omega$ by 
\[
F(\alpha,\beta)(n) = 
\begin{cases}
0 &\text{if $(\alpha\oplus\beta)(n)=0$}, \\
\omega &\text{if $(\alpha\oplus\beta)(n)=1$}.
\end{cases}
\]

Now since the adapted version of $F$ still has to be continuous, 
we cannot make the case distinction above separately for every 
$\alpha$ and $\beta$: 
For some $\alpha$ and $\beta$ where case~1 applies, 
there may be close $\alpha'$ and $\beta'$ where case~2 applies, 
and since the coding for $\alpha'$ and $\beta'$ starts from the 
beginning, we have no choice but to do the same for 
$\alpha$ and $\beta$.
So we define the adaptation of $F(\alpha,\beta)(\la\sigma\ra))$, 
with $\sigma\in \kappa^{<\kappa}$, by switching between the two cases 
dynamically. As long as we have not found a point yet where case 1 
applies, we follow the coding of case~2, and otherwise we switch to 
the coding of case~1. 
Specifically, if for every $\sigma'\sqsubseteq\sigma$, 
$F(\alpha,\beta)(\la\sigma'\ra))$ is $0$ or a limit, we apply 
the coding of case~2. Otherwise, if $\sigma'\sqsubseteq\sigma$ 
is the minimal initial such that 
$F(\alpha,\beta)(\la\sigma'\ra))$ is a successor, we apply from 
that point onwards the coding of case~1. 

This concludes the proof that we can make $F$ injective without 
changing its behavior. 
As we already noted, the resulting combinator $s$ then satisfies 
Barendregt's axiom~\eqref{BA}.
In particular, all the hnfs $s\alpha\beta$ are unique. 
Since $\alpha$ can be retrieved from $s\alpha\beta$ for any $\beta$, 
it follows that $s\alpha$ is also injective, and hence 
the hnfs $s\alpha$ are also unique. 
It follows from the properties of $k$ that these are also 
different from all $k\alpha$, so that indeed all hnfs are different. 
\end{proof}

\begin{thm}\label{thm:K2kappa}
Every $\K_2^\kappa$ is strongly completable. 
\end{thm}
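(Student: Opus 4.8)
The plan is to mirror exactly the argument used for $\K_2$ and $\K_2^\eff$ in Theorem~\ref{thm:K2completable}. We have just established in Lemma~\ref{lem2:uhfn} that $\K_2^\kappa$ has unique head normal forms, for a suitable choice of the combinators $k$ and $s$. The theorem of Bethke, Klop, and de Vrijer~\cite{BethkeKlopVrijer1996}, which confirms Klop's conjecture, states that every pca with unique hnfs is strongly completable. Applying this criterion to $\K_2^\kappa$ yields the result immediately.

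More concretely, I would write: by Lemma~\ref{lem2:uhfn}, $\K_2^\kappa$ satisfies the hypothesis of the completability criterion of~\cite{BethkeKlopVrijer1996}, so it has a strong completion. One small point worth making explicit is that the criterion in~\cite{BethkeKlopVrijer1996} is stated for pcas in general, with no cardinality restriction, so nothing in the argument is sensitive to $\kappa$ being uncountable; the construction of the completing (total) combinatory algebra as a term model over $\K_2^\kappa$ goes through verbatim. I would also remark here that Theorem~\ref{thm:K2kappa} in particular recovers the strong completability of $\K_2$ in \emph{Kleene's original coding} (Definition~\ref{def:K2k} with $\kappa=\omega$), which, as noted after Theorem~\ref{thm:K2completable}, did not formally follow from the Shafer--Terwijn version even though the two codings are weakly equivalent.

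There is essentially no obstacle remaining at this stage: all the real work was done in verifying unique hnfs in Lemma~\ref{lem2:uhfn}, in particular the delicate continuity-preserving modification of the representing function $F$ to make the combinator $s$ injective and thereby secure Barendregt's axiom. The only thing one must be careful about is that Lemma~\ref{lem2:uhfn} already builds in the choice of $k$ and $s$ witnessing unique hnfs, so the invocation of the criterion is legitimate; this is why the proof of the theorem is just a one-line appeal to the lemma and~\cite{BethkeKlopVrijer1996}.

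\begin{proof}
This follows immediately from Lemma~\ref{lem2:uhfn} and the criterion of
Bethke, Klop, and de Vrijer~\cite{BethkeKlopVrijer1996}: every pca with
unique hnfs is strongly completable, and this holds with no restriction on
the cardinality of the pca. In particular, taking $\kappa = \omega$ in
Definition~\ref{def:K2k}, we obtain that $\K_2$ in Kleene's original coding
is strongly completable, as promised after Theorem~\ref{thm:K2completable}.
\end{proof}
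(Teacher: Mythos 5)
Your proof is correct and is essentially identical to the paper's: both deduce the theorem immediately from Lemma~\ref{lem2:uhfn} together with the unique-hnf completability criterion of Bethke, Klop, and de Vrijer. Your added remarks (that the criterion has no cardinality restriction, and that $\kappa=\omega$ recovers strong completability of $\K_2$ in Kleene's original coding) are accurate and match observations the paper itself makes elsewhere.
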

\begin{proof}
This is again immediate from Lemma~\ref{lem2:uhfn} 
and the criterion provided in Bethke, Klop, and de Vrijer 
\cite{BethkeKlopVrijer1996}.
\end{proof}

For $\K_2$ (i.e.\ for $\kappa=\omega$), we gave two arguments for its
strong completability in Section~\ref{sec:uhnf}. 
Apart from the argument using Lemma~\ref{lem:uhfn} about unique hnfs 
we had an easier option using van Oosten's model~$\B$. 
It would be interesting to generalize this to $\B^\kappa$ for all 
$\kappa$ with $\kappa^{<\kappa} = \kappa$ as above. 
It is not immediately clear how to define $\B^\kappa$. 
Simply restricting $\K_2^\kappa$ to partial functions does not work, 
as this makes use of the machine model for $\K_2$ that is not 
available for $\K_2^\kappa$. Also, the coding from Definition~\ref{def:K2k}
is not suitable for $\B$, as is discussed in 
Longley and Normann~\cite[p77]{LongleyNormann}, because this coding 
does not allow to jump over undefined values. 

The definition of $\K_2^\kappa$ uses the property 
$\kappa^{<\kappa} = \kappa$, 
needed for the coding $\la\cdot\ra: \kappa^{<\kappa} \to \kappa$.
It is known that inaccessible cardinals have this property, 
cf.\ Jech~\cite[p50]{Jech}, 
but the existence of inaccessibles cannot be proved in $\ZFC$
(cf.\ Kunen~\cite{Kunen}).
For $\kappa = 2^\omega$, the property $\kappa^{<\kappa} = \kappa$ 
is independent.\footnote{One can easily check that the statement 
is true under CH and false e.g.\ when 
$2^\omega = \omega_2$ and $2^{\omega_1} = \omega_3$.} 
To guarantee existence of $\K_2^\kappa$ for arbitrary large $\kappa$
we can for example use the generalized continuum hypothesis $\GCH$, 
which says that 
$2^{\omega_\alpha} = \omega_{\alpha+1}$ for every ordinal $\alpha$.
(This is consistent by G\"odels famous result that 
$\GCH$ holds in~$\mathrm{L}$.)
Assuming $\GCH$ we have 
\begin{equation}\label{condition}
\omega_{\alpha+1}^{<\omega_{\alpha+1}} =
(2^{\omega_\alpha})^{\omega_\alpha}= 
2^{\omega_\alpha} = \omega_{\alpha+1}
\end{equation}
so that $\kappa=\omega_{\alpha+1}$ satisfies the hypothesis we need
for $\K_2^\kappa$.

\section{Embeddings into $\K_2^\kappa$}

Having seen that every countable pca weakly embeds into $\K_2$, 
we want to extend this result to larger cardinalities. 
Now we do have the large pcas $\K_2^\kappa$, 
which is a pca on $\kappa^\kappa$,  
but as discussed in Section~\ref{sec:K2} the definition of 
this is contingent on the condition $\kappa^{<\kappa} = \kappa$, 
which implies that for $\kappa>\omega$ at best we know that their existence is 
consistent with $\ZFC$ (using $\GCH$).

\begin{thm}\label{thm:emb}
Assuming $\GCH$, for every pca $\A$ there exists a cardinal $\kappa$
such that $\A$ weakly embeds into $\K_2^\kappa$.
\end{thm}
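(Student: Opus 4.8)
The plan is to generalize, at the right cardinality, the proof that every countable pca weakly embeds into $\K_2$ (Golov and Terwijn~\cite[Corollary~6.2]{GolovTerwijn}), working directly with the coding of Definition~\ref{def:K2k} in place of the machine model, which is unavailable for $\K_2^\kappa$. Given a pca $\A$, I would put $\mu = |\A| + \aleph_0$ and $\kappa = \mu^+$. Under $\GCH$ one has $\kappa^{<\kappa} = \kappa^\mu = (2^\mu)^\mu = 2^\mu = \kappa$ (compare \eqref{condition}), so $\K_2^\kappa$ is defined, and moreover $|\A| \leq \mu < \kappa$. Since $\kappa > \omega$, there are $\kappa$ many limit ordinals below $\kappa$, so I can fix an injection $e : \A \to \{0\} \cup \bigset{\lambda < \kappa : \lambda \text{ a limit}}$, to be thought of as a naming of the elements of $\A$. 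The idea is to send $a \in \A$ to a real $f(a) \in \kappa^\kappa$ that stores its own name $e(a)$ at coordinate $0$, and that, viewed via Definition~\ref{def:K2k} as a partial continuous function, acts on any $f(b)$ by reading off $e(b)$, consulting the (fixed) application table of $\A$ to see whether $ab\darrow$, and if so returning $f(ab)$.

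To make this compatible with the coding, I would first choose the bijection $\la\cdot\ra : \kappa^{<\kappa}\to\kappa$ so that, writing $[\xi]$ for the sequence with code $\xi$, it satisfies $\la\la\ra\ra = 0$ and $\xi > [\xi](0)$ whenever $[\xi]$ has length $2$ (the code of a length-$2$ sequence exceeds its first value). Such a bijection exists by a straightforward transfinite recursion — assign codes in turn, postponing a length-$2$ sequence $\sigma$ until a code larger than $\sigma(0)$ is reached — using that $\kappa$ is regular (a consequence of $\kappa^{<\kappa}=\kappa$) and that there are $\kappa$ many sequences of each positive length. Then define $f : \A \to \kappa^\kappa$ by recursion on the coordinate $\xi < \kappa$: set $f(a)(\xi) = e(a)$ if $[\xi]=\la\ra$ (so $f(a)(0)=e(a)$); $f(a)(\xi) = 0$ if $[\xi]$ has length $1$; $f(a)(\xi) = f(ab)([\xi](0)) + 1$ if $[\xi]$ has length $2$ with $[\xi](1) = e(b)$ for some $b\in\A$ and $ab\darrow$; and $f(a)(\xi) = 0$ otherwise. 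The only recursive clause is the third, where $[\xi](0) < \xi$, so the recursion descends in the ordinals and terminates; as $\kappa$ is a limit ordinal the values stay below $\kappa$, so each $f(a)$ is a well-defined element of $\kappa^\kappa$, and $f$ is injective since $f(a)(0) = e(a)$ recovers $a$.

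It then remains to check \eqref{emb}. Suppose $a,b\in\A$ with $ab\darrow$; by Definition~\ref{def:K2k} it suffices to show $F_{f(a)}(n\concat f(b)) = f(ab)(n)$ for every $n<\kappa$. Inspecting the initial segments of $\delta := n\concat f(b)$ that occur in \eqref{app1}--\eqref{app2}: $\delta\restr 0 = \la\ra$ and $f(a)(\la\la\ra\ra) = f(a)(0) = e(a)$ is $0$ or a limit; $\delta\restr 1$ has length $1$, so $f(a)$ at its code is $0$; and $\delta\restr 2$ is the length-$2$ sequence with values $n$ and $f(b)(0) = e(b)$, so by the third clause $f(a)$ at its code equals $f(ab)(n)+1$, a successor. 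Hence $F_{f(a)}(n\concat f(b)) = f(ab)(n)$, i.e.\ $f(a)f(b)\darrow\, = f(ab)$. Note that one never needs the converse implication, that $ab\uarrow$ should force $f(a)f(b)\uarrow$, since a weak embedding only constrains the case $ab\darrow$; this is what allows all remaining coordinates of $f(a)$ to be set to $0$.

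I expect the main difficulty to be the bookkeeping concentrated in the second paragraph, with the same austere flavour as the proof of Lemma~\ref{lem2:uhfn}. Two points need care: the application of Definition~\ref{def:K2k} must be forced to read far enough into its argument to learn the argument's name before it can ``commit'' to an output value, which is exactly why the name slot is required to hold $0$ or a limit; and the natural recursive definition of $f(a)$ — whose value at a length-$2$ coordinate is one more than a value of $f(ab)$ — must be made well-founded, which is arranged by the monotonicity property imposed on $\la\cdot\ra$. Once these are in place, the rest is a routine transcription of the countable argument.
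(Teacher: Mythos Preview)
Your proof is correct, and follows the same overall strategy as the paper --- choose $\kappa$ a successor cardinal above $|\A|$, name the elements of $\A$ by ordinals below $\kappa$, and define $f(a)\in\kappa^\kappa$ so that applying $f(a)$ (in the sense of Definition~\ref{def:K2k}) reads off the name of its argument and consults the application table of~$\A$ --- but the technical implementation differs in two respects worth noting.

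First, the paper defines $f(a)$ \emph{explicitly}, without recursion: it introduces a stratification $L=\bigcup_i L_i\subseteq\kappa$ where $L_0=\{0\}$ and $L_{i+1}=\{\la m\concat z\ra:m\in L_i,\,z\in\kappa\}$, and sets $f(a)(\la n\concat z\ra)$ for $n\in L_{i-1}$ by directly computing the $i$-fold iterated application in~$\A$, unrolling by hand the recursion you set up. Your recursive definition is more compact, but requires the monotonicity condition on length-$2$ codes to be well-founded; the paper's unrolled version avoids this, needing only $\la\emptyset\ra\notin L$ (so in particular $\la\emptyset\ra\neq 0$).

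Second, your device of taking $e(a)$ to be $0$ or a limit, so that $f(a)(\la\emptyset\ra)=e(a)$ does not trigger \eqref{app1} prematurely, is neater than the paper's solution, which is to place the name at a coordinate outside~$L$ (namely $f(a)(0)$ with $0\notin\ran{\la\cdot\ra\restr\{\emptyset\}}$) and set $f(a)(\la\emptyset\ra)=0$ instead. Both proofs thus impose mild conditions on the bijection $\la\cdot\ra$; the paper's footnote explicitly acknowledges that this is a choice, so your stronger (but still easily arranged) condition is equally legitimate. Overall your argument is a tidy variant that trades the paper's explicit level bookkeeping for a well-founded simultaneous recursion.
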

\begin{proof}
Assume $\GCH$, and let $\A$ be any pca. Let $\alpha$ be an ordinal 
such that $|\A|\leq \omega_{\alpha+1}$, 
and let $\kappa = \omega_{\alpha+1}$.
By \eqref{condition} we have $\kappa^{<\kappa} = \kappa$, and hence 
the pca $\K_2^\kappa$ exists. 
We prove that $\A$ weakly embeds into $\K_2^\kappa$.

The proof from \cite{GolovTerwijn} that every countable pca 
$\A$ embeds into $\K_2$ uses the recursion theorem, 
which we do not have available for large cardinalities, 
so in the proof below we have to avoid this. 
But we can still use the idea that because 
$|\A|\leq \kappa$, the graph of the application operator in 
$\A$ can be coded by an element of $\kappa^\kappa$.

Since $|\A|\leq \kappa$, we can fix a numbering 
$\gamma: \kappa\to\A$. 
If $\gamma(n)=a$ we call $n$ a {\em code\/} of $a$. 
We call $n\in\kappa$ a {\em minimal code\/} if 
for all $m<n$, $\gamma(m)\neq a$.

To obtain an embedding 
$f:\A \hookrightarrow \K_2^\kappa$ we define $f(a)\in\kappa^\kappa$ 
for every $a\in\A$ such that 
\begin{equation}\label{e}
\A\models a\cdot b\darrow = c \;\Longrightarrow \;
\K_2^\kappa \models f(a)\cdot f(b)\darrow = f(c).
\end{equation}

Define a subset $L\subseteq\kappa$ by 
\begin{align*}
L_0 &= \{0\} \\
L_{i+1} &= \{\la m\concat z\ra \mid m\in L_i \wedge z\in\kappa\} \\
L &= \bigcup_{i\in\omega} L_i.
\end{align*}
Note that the definition of $\K_2^\kappa$ depends on the choice of 
the coding of sequences $\la\cdot\ra$. 
W.l.o.g.\ we may assume that in this coding 
$\la\emptyset\ra\neq 0$, and more generally that 
$\la\emptyset\ra \notin L$.\footnote{
Obviously we may {\em choose\/} the bijective coding 
$\la\cdot\ra: \kappa^{<\kappa} \to \kappa$ to satisfy this. 
If the coding was given, and $\la\emptyset\ra = 0$, then the 
definition of $f(a)$ would have to be adapted, and in particular 
we could not use the first bit $f(a)(0)$ to be a code of~$a$, 
which would make the definition of $f(a)$ even uglier.}

\noindent 
Referring to the coding before Definition~\ref{def:K2k}, 
we define $f(a)$ by:
\begin{enumerate}[label={\rm (\roman*)}]

\item {\em Case $L_0$.} $f(a)(0) = w_0 =$  minimal code of $a$.

\item {\em Case $L_1$.} 
$f(a)(\la 0\concat z_0 \ra) =$ minimal code +1 of 
$\gamma(w_0)\cdot\gamma(z_0)\in\A$ if this is defined, and $0$ otherwise. 

\item {\em Case $L_i$.} 
$f(a)(\la n\concat z_0 \ra)$ for $n\in L_{i-1}$, $i\geq 1$, 
is defined as follows. Note that $n$ is of the form 
\begin{equation}\label{n}
n = \la\la\la\ldots\la\la 0\concat z_{i-1}\ra\concat z_{i-2}\ra\concat
\ldots \ra\concat z_1\ra\concat z_0\ra.
\end{equation}
Define 
\begin{align*}
w_1 &\text{ minimal code of } \gamma(w_0)\cdot\gamma(z_0), \\
w_2 &\text{ minimal code of } \gamma(w_1)\cdot\gamma(z_1), \\
&\vdots\\
w_i &\text{ minimal code of } \gamma(w_{i-1})\cdot\gamma(z_{i-1}).
\end{align*}
Define $f(a)(\la n\concat z_0 \ra) = w_i+i$ provided that all 
applications in the definition of $w_1,\ldots,w_i$ are defined, 
and $f(a)(\la n\concat z_0 \ra) =0$ otherwise.

\item $f(a)(n)=0$ for all $n\notin L$.

\end{enumerate}
Note that with this definition, the graph 
$\{(a,b,c)\mid \A\models a\cdot b\darrow=c\}$ is 
coded into $f(a)$ for every~$a$.
Now we should check that \eqref{e} holds.
Suppose that $a\cdot b\darrow=c$ in $\A$. 
We check that Definition~\ref{def:K2k} gives 
$(f(a)\cdot f(b))(n) = f(c)(n)$ for every $n\in\kappa$.
Suppose that $u,v,w$ are minimal codes of $a,b,c$, respectively.

{\em Case $n=0$.}
By \eqref{app1} we have 
$(f(a)\cdot f(b))(0) = 
F_{f(a)}(0\concat f(b)) = 
f(a)(\la 0\concat v\ra)-1 $.
We use here that $f(a)(\la\emptyset\ra)=0$ 
(because we have assumed $\la\emptyset\ra\notin L$)
and $f(a)(\la 0\ra)=0$ (because $\la 0\ra\notin L$), 
so that \eqref{app2} is fulfilled.
Now by (ii) we have $f(a)(\la 0\concat v\ra) =$
minimal code $+1$ of $\gamma(u)\cdot \gamma(v) = c$,
which is $w+1$.
By (i) we have $f(c)(0) = w$, so that indeed 
$(f(a)\cdot f(b))(0) = f(c)(0)$.

{\em Case $n\in L_i$, $i\geq 1$.} 
Suppose $n$ is of the form \eqref{n}. By (iii) we have 
$f(c)(n) = w_i+i$, starting with $w_1$ a minimal code of 
$\gamma(w)\cdot \gamma(z_0)$, where $w$ is now the minimal 
code of~$c$. 

This should equal 
$(f(a)\cdot f(b))(n) = F_{f(a)}(n\concat f(b)) = f(a)(\la n\concat v\ra)-1$.
Indeed, $f(a)(\la n\concat v\ra)$ $=$ $w_i + i +1$, where we start with
$w$ a minimal code of $\gamma(u)\cdot\gamma(v)$, 
$w_1$ a minimal code of $\gamma(w)\cdot\gamma(z_0)$, etc.
So we see that $(f(a)\cdot f(b))(n) = f(c)(n)$, provided that also 
\eqref{app2} is fulfilled. 

For $f(a)(\la n\concat v\ra)$ to produce the desired value, 
we need for \eqref{app2} that both 
$f(a)(\la n\ra)$ and $f(a)(\la\emptyset\ra)$ are~$0$. 
Again, by assumption, $\la\emptyset\ra\notin L$, 
so $f(a)(\la\emptyset\ra)=0$. 
For every $n\in L$ we have $\la n \ra\notin L$, so indeed also
$f(a)(\la n\ra)=0$ by~(iv).
\end{proof}

\section{Concluding remarks on completions}

Asperti and Ciabattoni \cite{AspertiCiabattoni1997} proved that 
if Barendregt's axiom \eqref{BA} holds in a pca then it has 
a weak completion.
(See also the discussion about this in 
Bethke et al.\ \cite[p501]{BethkeKlopVrijer1999}.)
By Engeler~\cite{Engeler} (see footnote~\ref{ftn:Engeler} above), 
the condition about Barendregt's axiom is not needed. 
Note that Barendregt's axiom does not hold for the example 
from Bethke et al.\ \cite{BethkeKlopVrijer1999}
of a pca without strong completions. 
Indeed, on p486 loc.\ cit.\ we have $a$ and $b$ distinct such 
that $s(sk)a=s(sk)b$, which violates Barendregt's axiom. 
But since the example constructed in \cite{BethkeKlopVrijer1999}
is countable, and every countable pca weakly embeds into $\K_2$, 
it also follows from the strong completability of $\K_2$ that it 
has a weak completion. 

Theorem~\ref{thm:emb} shows that, under GCH, every pca 
weakly embeds into a $\K_2^\kappa$ for suitably large $\kappa$. 
Since each $\K_2^\kappa$ is strongly completable, this has as a
corollary that every pca has a weak completion (still assuming 
GCH). The latter statement follows unconditionally from  
Engeler~\cite{Engeler}.
GCH is needed in the proof of Theorem~\ref{thm:emb} to guarantee 
the existence of large cardinals $\kappa$ with 
$\kappa^{<\kappa} = \kappa$.
However, we can get the corollary about completions unconditionally 
by using a binary version of $\K_2^\kappa$ defined on $2^\kappa$ rather 
than $\kappa^\kappa$. 
Namely, for the binary version we need the condition $2^{<\kappa} = \kappa$, 
and ZFC unconditionally proves the existence of arbitrary large $\kappa$ 
with this property.\footnote{Namely, $2^{<\kappa} = \kappa$ holds for 
any $\kappa$ that is a strong limit, i.e.\ such that 
$\lambda<\kappa$ implies $2^\lambda<\kappa$, and the strong limit 
cardinals form a proper class, cf.\ Jech~\cite[p50]{Jech}.} 
This gives an alternative proof of Engeler's result that every pca
is weakly completable. 

Note that the condition $\kappa^{<\kappa} = \kappa$ is needed in the 
definition of $\K_2^\kappa$ as the analog of the coding of finite 
sequences $\omega^{<\omega} = \omega$ that is used in the definition 
of $\K_2$. 
Coding of finite sets also plays a role in the definition of 
Scott's graph model $\G$ mentioned above. 
It is interesting to see how Engeler avoids this condition by 
explicitly putting the finite sets into his definition of a 
graph algebra $G(A)$. This has as advantage that no coding is 
needed. On the negative side, the elements of $G(A)$ are 
somewhat contrived, and much more 
complicated than the elements of $\omega^\omega$ or~$\kappa^\kappa$.

\section*{Acknowledgments}

We thank Henk Barendregt for pointing out that it follows from   
Engeler~\cite{Engeler} that every pca has a weak completion, 
and for sharing the supplementary note 
Engeler~\cite{Engeler2021}.
We also acknowledge the helpful remarks of a referee that 
corrected an error in the proof of Lemma~\ref{lem2:uhfn}.

\bibliographystyle{alphaurl}
\bibliography{completions-lmcs}

\begin{thebibliography}{BKdV99}

\bibitem[AC97]{AspertiCiabattoni1997}
Andrea Asperti and Agata Ciabattoni.
\newblock A sufficient condition for completability of partial combinatory
  algebras.
\newblock {\em Journal of Symbolic Logic}, 62(4):1209–1214, 1997.

\bibitem[Bar84]{Barendregt}
Henk~P. Barendregt.
\newblock {\em The {L}ambda {C}alculus}, volume 103 of {\em Studies in Logic
  and the Foundations of Mathematics}.
\newblock North-Holland, Amsterdam, second edition, 1984.

\bibitem[Bet88]{Bethke1988}
Inge Bethke.
\newblock {\em Notes on partial combinatory algebras}.
\newblock Phd thesis, University of Amsterdam, 1988.

\bibitem[BKdV96]{BethkeKlopVrijer1996}
Inge Bethke, Jan~Willem Klop, and Roel de~Vrijer.
\newblock Completing partial combinatory algebras with unique head-normal
  forms.
\newblock In {\em Proceedings of the 11th Annual IEEE Symposium on Logic in
  Computer Science}, LICS '96, pages 448--454. IEEE Computer Society, 1996.

\bibitem[BKdV99]{BethkeKlopVrijer1999}
Inge Bethke, Jan~Willem Klop, and Roel de~Vrijer.
\newblock Extending partial combinatory algebras.
\newblock {\em Mathematical Structures in Computer Science}, (9):483--505,
  1999.

\bibitem[Eng81]{Engeler}
Erwin Engeler.
\newblock Algebras and combinators.
\newblock {\em Algebra Universalis}, (13):389--392, 1981.

\bibitem[Eng21]{Engeler2021}
Erwin Engeler.
\newblock Completing the proof of the lemma.
\newblock Unpublished note, 2021.

\bibitem[FT24]{FokinaTerwijn}
Ekaterina~B. Fokina and Sebastiaan~A. Terwijn.
\newblock Computable structure theory of partial combinatory algebras.
\newblock In L.~Patey, E.~Pimentel, L.~Galeotti, and F.~Manea, editors, {\em
  Twenty Years of Theoretical and Practical Synergies}, volume 14773 of {\em
  Lecture Notes in Computer Science}, pages 265--276, Cham, 2024. Springer
  Nature Switzerland.

\bibitem[GT23]{GolovTerwijn}
Anton Golov and Sebastiaan~A. Terwijn.
\newblock {Embeddings between Partial Combinatory Algebras}.
\newblock {\em Notre Dame Journal of Formal Logic}, 64(1):129--158, 2023.

\bibitem[Jec78]{Jech}
Thomas Jech.
\newblock {\em Set theory}.
\newblock Academic Press, 1978.

\bibitem[Klo82]{Klop}
Jan~Willem Klop.
\newblock {Extending partial combinatory algebras}.
\newblock {\em Bulletin of the European Association for Theoretical Computer
  Science}, 16:472--482, 1982.

\bibitem[Kun83]{Kunen}
Kenneth Kunen.
\newblock {\em Set theory: An introduction to independence proofs}.
\newblock North-Holland, 1983.

\bibitem[KV65]{KleeneVesley}
Stephen~C. Kleene and Richard~E. Vesley.
\newblock {\em The foundations of intuitionistic mathematics, especially in
  relation to recursive functions}.
\newblock North-Holland Publishing Co., Amsterdam, 1965.

\bibitem[LN15]{LongleyNormann}
John Longley and Dag Normann.
\newblock {\em {H}igher-{O}rder {C}omputability}.
\newblock Theory and Applications of Computability. Springer, Heidelberg, 2015.

\bibitem[Lon94]{Longley}
John Longley.
\newblock {\em Realizability Toposes and Language Semantics}.
\newblock Phd thesis, University of Edinburgh, 1994.

\bibitem[Odi89]{Odifreddi}
Piergiorgio Odifreddi.
\newblock {\em {C}lassical {R}ecursion {T}heory}, volume 125 of {\em Studies in
  Logic and the Foundations of Mathematics}.
\newblock North-Holland, Amsterdam, 1989.

\bibitem[Sco75]{Scott}
Dana Scott.
\newblock Lambda calculus and recursion theory (preliminary version).
\newblock In S.~Kanger, editor, {\em Proceedings of the Third Scandinavian
  Logic Symposium}, volume~82 of {\em Studies in Logic and the Foundations of
  Mathematics}, pages 154--193. North-Holland, 1975.

\bibitem[ST21]{ShaferTerwijn}
Paul Shafer and Sebastiaan~A. Terwijn.
\newblock Ordinal analysis of partial combinatory algebras.
\newblock {\em Journal of Symbolic Logic}, 86(3):1154--1188, 2021.

\bibitem[Ter24]{Terwijn2023}
Sebastiaan~A. Terwijn.
\newblock The complexity of completions in partial combinatory algebra.
\newblock {\em Mathematical Structures in Computer Science}, 34(6):455--466,
  2024.

\bibitem[vO99]{vanOosten1999}
Jaap van Oosten.
\newblock {A combinatory algebra for sequential functionals of finite type}.
\newblock In S.~B. Cooper and J.~K. Truss, editors, {\em Models and
  Computability}, pages 389--406. Cambridge University Press, 1999.

\bibitem[vO08]{vanOosten}
Jaap van Oosten.
\newblock {\em {R}ealizability: {A}n {I}ntroduction to {I}ts {C}ategorical
  {S}ide}, volume 152 of {\em Studies in Logic and the Foundations of
  Mathematics}.
\newblock Elsevier, Amsterdam, 2008.

\bibitem[vOV18]{vanOostenVoorneveld}
Jaap van Oosten and Niels Voorneveld.
\newblock Extensions of {S}cott's graph model and {K}leene's second algebra.
\newblock {\em Indagationes Mathematicae}, 29(1):5--22, 2018.

\bibitem[Zoe22]{Zoethout}
Jetze Zoethout.
\newblock {\em Computability models and realizability toposes}.
\newblock Phd thesis, Utrecht University, 2022.

\end{thebibliography}

\end{document}